\date{1 (14) December  2015}
\author{Theodore~Voronov}
\address{School of Mathematics, University of Manchester, Manchester, M60 1QD, UK\\
{\hphantom{hh}Dept. of Quantum Field Theory, Tomsk State University, Tomsk, 634050, Russia}}
\email{theodore.voronov@manchester.ac.uk}
\title[Quantum microformal  morphisms]{Quantum microformal  morphisms of supermanifolds: an explicit formula  and further properties}
\newtheorem{theorem}{Theorem}
\theoremstyle{definition}
\newtheorem{definition}{Definition}%[section]
\theoremstyle{remark}
\newtheorem*{remark}{Remark}%[section]
\def\co{\colon\thinspace}
\renewcommand{\geq}{\geqslant}
\newcommand{\der}[2]{{\frac{\partial {#1}}{\partial {#2}}}}
\newcommand{\fun}{C^{\infty}}
\DeclareMathOperator{\ofun}{\mathit{OC_{\hbar}^{\infty}}}
\newcommand{\f}{{\varphi}}
\newcommand{\itt}{{\tilde\imath}}
\newcommand{\tto}{{\linethickness{2pt}
		  \,\begin{picture}(1,0)
                   \put(0,0.26){\line(1,0){0.95}}
                   \put(0,0){$\boldsymbol{\rightarrow}$}
                   %\put(0,0){$\hphantom{\boldsymbol{\to}}$}
                  \end{picture}
                  }\,
}
\newcommand{\ttoq}{\tto_q}
\begin{document}
\begin{abstract} We give an explicit formula, as a formal differential operator, for quantum microformal morphisms of (super)manifolds that we introduced earlier. Such quantum microformal morphisms are essentially oscillatory integral operators or Fourier integral operators of a particular kind. They act on oscillatory wave functions, whose algebra extends the algebra of formal power series in Planck's constant.

In the classical limit, quantum microformal morphisms reproduce, as the main term of the asymptotic, the nonlinear pullbacks of functions with respect to `\,classical\,' microformal morphisms. We found these nonlinear transformations  of functions in search of  $L_{\infty}$-morphisms for homotopy Poisson structures.
\end{abstract}

\maketitle

In our previous paper~\cite{tv:oscil}, we introduced `quantum' microformal morphisms between (super)manifolds. They are essentially oscillatory integral operators or Fourier integral operators of a particular kind acting on functions on one (super)manifold and transforming them to functions on another (super)manifold. Here we give an explicit formula for them, expressing them as   formal differential operators of infinite order. (There is some mild resemblance of the form of this formula with the expression  for a partition function for interacting fields in QFT.)

The motivation for considering such objects is as follows. 

We first discovered `\,classical\,' microformal morphisms~\cite{tv:nonlinearpullback, tv:microformal} \,---\, in search of a construction of $L_{\infty}$-morphisms for homotopy Poisson structures on manifolds. They arise as an amazing generalization of ordinary smooth maps, for which there
is a notion of pullback of functions generalizing   ordinary pullbacks. Unlike the latter, pullbacks with respect to microformal morphisms
are    nonlinear (formal) transformations of spaces of functions. Being nonlinear, they cannot be ring homomorphisms as ordinary pullbacks. However, they possess a remarkable property that their derivatives at each point are ring homomorphisms.

`\,Quantum\,'  microformal morphisms~\cite{tv:oscil} are in the same relation to `\,classical\,' microformal morphisms~\cite{tv:nonlinearpullback, tv:microformal} as the Schr\"{o}dinger equation is to the Hamilton--Jacobi equation. Classical microformal morphisms are constructed by methods of symplectic geometry, in particular,  generating functions.  Equations of Hamilton--Jacobi form arise there at many instances when we connect this theory with (homotopy) Poisson structures. We felt from the start that this must be a projection of a hypothetical quantum version, which was eventually found in~\cite{tv:oscil}.

Let us give precise definitions and statements.

Quantum   microformal morphisms act on `oscillatory wave functions'. 
\begin{definition} An \emph{oscillatory wave function} on a (super)manifold $M$ is a linear combination of  formal expressions of the form
\begin{equation*}
    w(x)=a(\hbar, x)e^{\frac{i}{\hbar}b(x,\hbar)}
\end{equation*}
where $a(x,\hbar)$ and $b(x,\hbar)$ are  formal power series in $\hbar$ whose coefficients are smooth functions on $M$. By writing $b(x,\hbar)=b_0(x)+\hbar b_1(x)+\ldots$ and taking logarithms, we can always re-write the exponential so that    $w(x)=A(\hbar, x)e^{\frac{i}{\hbar}b_0(x)}$, with    some formal  power series $A(x,\hbar)$. 

We arrive at the \emph{algebra of oscillatory wave functions} on $M$, which we denote $\ofun(M)$, so that $\ofun(M)=\fun(M)[[\hbar]]\exp \frac{i}{\hbar}\fun(M)$. It is an algebra over  the algebra $\fun(M)[[\hbar]]$ of formal power series in $\hbar$.
\end{definition}

Recall the construction of quantum   microformal morphisms~\cite{tv:oscil}. Consider supermanifolds $M_1$ and $M_2$. In  given coordinate systems on $M_1$ and $M_2$, a quantum   microformal morphism $\hat\Phi \co M_1\ttoq  M_2$ is specified by a quantum generating function $S_{\hbar}(x,q)$. Here $x^a$ are local coordinates on $M_1$, $y^i$ are local coordinates on $M_2$, and $p_a$ and $q_i$ denote the corresponding components of momenta. The function $S_{\hbar}(x,q)$ is a formal power series in $q_i$\,:
\begin{equation*}
    S_{\hbar}(x,q)=S_{\hbar}^0(x)+ \f^i_{\hbar}(x)q_i + \frac{1}{2}\,S^{ij}_{\hbar}(x)q_jq_i + \frac{1}{3!}\,S^{ijk}_{\hbar}(x)q_kq_jq_i + \ldots
\end{equation*}
The coefficients of the expansion are formal power series in $\hbar$. Note that $S_{\hbar}(x,q)$ is a coordinate-dependent object and not a scalar function. Its transformation law will be clarified later. (The reason why we denoted the coefficients of the linear term differently from the others will also become clear shortly.)

\begin{remark} There are two different kinds of power expansions in this theory: expansions in $\hbar$ and expansions in the ``coupling constant''. More precisely, the latter are the expansions in the momenta $q$ for generating functions $S_{\hbar}(x,q)$ and the corresponding  %local %``functional''
expansions in powers of the functions and their derivatives in the formulas for pullbacks and the compositions of morphisms.
\end{remark}

\begin{definition} By definition, a   \emph{quantum   microformal morphism} $\hat\Phi\co M_1\ttoq  M_2$ is identified with its action on functions (in the opposite direction), called \emph{quantum pullback} and denoted $\hat\Phi^*$, which is defined  by the formula
\begin{equation}\label{eq.phihat}
    (\hat\Phi^* w)(x)=\frac{1}{(2\pi\hbar)^{n_2}} \int_{T^*M_2} D(y,q) \,\, e^{\frac{i}{\hbar}(S_{\hbar}(x,q)-y^iq_i)}\,w(y)\,.
\end{equation}
The integral operator $\hat\Phi^*$ is a linear transformation $\hat\Phi^*\co \ofun(M_2)\to \ofun(M_1)$ of oscillatory wave functions. 
\end{definition}
(Integration in~\eqref{eq.phihat} is w.r.t. the Liouville measure on $T^*M_2$. The coefficient in front of the integral is the standard factor depending only on  dimension. For the simplicity of notation, it is written here for the more familiar even case. Everything works for supermanifolds.)

In the quasi-classical limit $\hbar\to 0$, a quantum   microformal morphism $\hat\Phi\co M_1\ttoq  M_2$ becomes a classical microformal morphism $\Phi\co M_1\tto   M_2$ as defined in~\cite{tv:nonlinearpullback, tv:microformal}. More precisely:

\begin{theorem}[\cite{tv:oscil}] Consider the `classical generating function' $S_0(x,q)$ such that
 \begin{equation*}
    S_{\hbar}(x,q)=S_0(x,q)+O(\hbar)\,.
 \end{equation*}
%, for the   quantum generating function $S_{\hbar}(x,q)$.
Let $\hat\Phi$ be the the classical microformal morphism defined by  $S_0(x,q)$.  Then for an arbitrary oscillatory wave function of the form $w(y)=e^{\frac{i}{\hbar}g(y)}$ on $M_2$, the quantum pullback $\hat\Phi^*(w)$ %given by the integral~\eqref{eq.phihat} 
is an oscillatory wave function $e^{\frac{i}{\hbar}f_{\hbar}(x)}$ on $M_2$ such that
\begin{equation*}
    f_{\hbar}=\Phi^*(g)+O(\hbar)\,,
\end{equation*}
where  $\Phi^*(g)$ is the  pullback   of $g$    w.r.t. the microformal morphism $\Phi\co M_1\tto   M_2$  as  defined in~\cite{tv:nonlinearpullback, tv:microformal}.
\end{theorem}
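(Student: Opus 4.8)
The plan is to evaluate the oscillatory integral \eqref{eq.phihat} by the stationary phase (steepest descent) method, regarding $x$ as a spectator parameter. First I would substitute $w(y)=e^{\frac{i}{\hbar}g(y)}$ into \eqref{eq.phihat} and merge the two exponentials, so that the integrand becomes $D(y,q)\exp\frac{i}{\hbar}\Psi_{\hbar}(x,y,q)$ with the single total phase
\begin{equation*}
    \Psi_{\hbar}(x,y,q)=S_{\hbar}(x,q)-y^iq_i+g(y)\,.
\end{equation*}
Since $S_{\hbar}=S_0+O(\hbar)$, the leading phase is $\Psi_0=S_0(x,q)-y^iq_i+g(y)$, and the whole $\hbar$-dependence of $S_{\hbar}$ beyond $S_0$ will only feed into the subleading corrections.

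Next I would locate the critical point of $\Psi_0$ in the fiber variables $(y,q)$. The stationarity equations
\begin{equation*}
    \frac{\partial\Psi_0}{\partial q_i}=\frac{\partial S_0}{\partial q_i}-y^i=0\,,\qquad \frac{\partial\Psi_0}{\partial y^i}=-q_i+\frac{\partial g}{\partial y^i}=0
\end{equation*}
are exactly the relations $y^i=\partial S_0/\partial q_i$ and $q_i=\partial g/\partial y^i$ that define the classical microformal pullback associated with $S_0$. Moreover the critical value $\Psi_0|_{\mathrm{crit}}(x)=\bigl(S_0(x,q)-y^iq_i+g(y)\bigr)\big|_{\mathrm{crit}}$ is, by the very definition in~\cite{tv:nonlinearpullback, tv:microformal}, equal to $\Phi^*(g)(x)$; a one-line differentiation check (at the critical point $d\bigl(\Psi_0|_{\mathrm{crit}}\bigr)=\frac{\partial S_0}{\partial x^a}\,dx^a$) confirms that this critical value is the correct generating object.

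Then, invoking the stationary phase asymptotics, the integral equals $e^{\frac{i}{\hbar}\Psi_0|_{\mathrm{crit}}}$ times an amplitude that is a formal power series in $\hbar$ built from the Gaussian fluctuation integral (the Hessian determinant of $\Psi_0$ in the fiber directions), the measure $D(y,q)$ evaluated near the critical point, and the higher $\hbar$-terms of $S_{\hbar}$. Taking the logarithm of the result and multiplying by $\hbar/i$ then exhibits $\hat\Phi^*(w)$ in the form $e^{\frac{i}{\hbar}f_{\hbar}(x)}$ with $f_{\hbar}=\Phi^*(g)+O(\hbar)$, which is the assertion.

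The main obstacle is to make the stationary-phase step legitimate inside the purely formal framework of the paper. One must check that the critical point exists and is unique as a formal object, solvable order by order in the coupling-constant expansion; this is guaranteed by the fact that $S_0(x,q)$ begins at the linear term $\f^i(x)q_i$ in $q$, so that the implicit equations can be inverted perturbatively. One must also verify that the fiber Hessian is nondegenerate, so the fluctuation integral is a genuine (Fresnel) Gaussian, and, crucially, that all subleading contributions reorganize into a single power series $f_{\hbar}$ in $\hbar$ sitting in the exponent \,---\, so that the output is again an oscillatory wave function of exactly the stated form rather than a more general linear combination.
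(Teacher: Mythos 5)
Your proposal follows essentially the same route as the paper's own proof: merge the exponentials into a single phase $S_{\hbar}(x,q)+g(y)-y^iq_i$, apply stationary phase in the fiber variables $(y,q)$, and recognize the critical-point equations and critical value as precisely the defining construction of $\Phi^*(g)$ in~\cite{tv:nonlinearpullback, tv:microformal}. Your added remarks on formal solvability of the critical-point equations and on the reorganization of subleading terms are sensible extra care (the paper's proof is terser and omits them); the only small detail to note is the sign factor $(-1)^{\itt}$ in the equation $y^i=(-1)^{\itt}\lder{S_0}{q_i}$ needed in the super setting.
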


{\small

\begin{proof}
The quantum pullback $\hat\Phi^*w$ of an oscillatory wave function $w(y)=e^{\frac{i}{\hbar}g(y)}$ is given by the integral w.r.t. $y,q$ of the oscillating exponential
\begin{equation*}
    e^{\frac{i}{\hbar}(S_{\hbar}(x,q)+g(y)-y^iq_i)}\,.
\end{equation*}
By the stationary phase method, the value of the integral, in the main term in $\hbar$, is the same exponential evaluated at the critical points of the phase when $\hbar\to 0$. By differentiating w.r.t. $y^i$ and $q_i$ and setting the result to zero, we obtain the system of equations
 \begin{equation*}
    q_i=\der{g}{y^i}(y)\,,\quad y^i=(-1)^{\itt}\der{S_0}{q_i}(x,q) 
 \end{equation*}
for determining $y^i, q_i$, the unique solution of which should be substituted into $S_{0}(x,q)+g(y)-y^iq_i$. This is exactly the construction of $f=\Phi^*(g)$ in~\cite{tv:nonlinearpullback, tv:microformal}.
\end{proof}

}

(Note that $\Phi^*\co \fun(M_2)\to \fun(M_1)$ is,  in general, a nonlinear formal transformation of the spaces of functions. It is the usual pullback, $\Phi^*=\f^*$, if $\Phi=\f$ is an ordinary map $\f\co M_1\to M_2$. In such case, $S(x,q)=\f^i(x)q_i$, where $y^i=\f^i(x)$.)

Surprisingly, the quantum pullback $\hat \Phi^*\co \ofun(M_2)\to \ofun(M_1)$ can be expressed in a closed form as a formal differential operator, i.e., the integral~\eqref{eq.phihat} can be  explicitly evaluated. (This is an advantage over the corresponding classical pullback, which is in general known only in terms of an iterative procedure.) It is actually very simple.

Let us write a quantum generating function $S_{\hbar}(x,q)$ defining a quantum microformal morphism $\hat\Phi\co M_1\ttoq M_2$ as
\begin{equation}\label{eq.qgenfun}
    S_{\hbar}(x,q)=S_{\hbar}^0(x)+\f^i_{\hbar}(x)q_i+S^{+}_{\hbar}(x,q)\,,
\end{equation}
where   $S^{+}_{\hbar}(x,q)$ is the sum of all  terms of order $\geq 2$ in $q_i$. Such grouping of the terms of the expansion in $q_i$ is explained by the result below. 

\begin{theorem} The action of $\hat \Phi^*$ on oscillatory wave functions can be expressed as follows:
\begin{equation}\label{eq.phiw}
     \bigl(\hat \Phi^*w\bigr)(x)=e^{\frac{i}{\hbar}S_{\hbar}^0(x)}
    \left(e^{\frac{i}{\hbar}S^{+}_{\hbar}\left(x,\frac{\hbar}{i}\der{}{y}\right)}w(y)\right)_{\left|\vphantom{\int\limits_a^b}\ y^i=\f^i_{\hbar}(x)\right.}\,.
\end{equation}
That is, it is the combination of a formal differential operator in $y^i$ applied to $w(y)$ followed by the substitution $y^i=\f^i_{\hbar}(x)$ and the multiplication by a given phase factor on $M_1$. 
\end{theorem}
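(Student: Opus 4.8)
The plan is to evaluate the oscillatory integral~\eqref{eq.phihat} directly, interpreting it formally, and to recognise the result as the differential operator on the right-hand side of~\eqref{eq.phiw}; the evaluation will be exact rather than merely asymptotic, which is what makes the closed form possible. First I would insert the decomposition~\eqref{eq.qgenfun} into the phase and pull the purely $x$-dependent factor $e^{\frac{i}{\hbar}S_{\hbar}^0(x)}$ out of the integral, since it depends neither on $y$ nor on $q$. This supplies the phase prefactor in~\eqref{eq.phiw} and reduces the claim to
\[
\frac{1}{(2\pi\hbar)^{n_2}}\int_{T^*M_2} D(y,q)\, e^{\frac{i}{\hbar}S^{+}_{\hbar}(x,q)}\,e^{\frac{i}{\hbar}(\f^i_{\hbar}(x)-y^i)q_i}\,w(y)
=\left(e^{\frac{i}{\hbar}S^{+}_{\hbar}\left(x,\frac{\hbar}{i}\der{}{y}\right)}w(y)\right)\Big|_{\,y^i=\f^i_{\hbar}(x)}.
\]

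The key mechanism is the Fourier duality between multiplication by $q_i$ and differentiation in $y^i$. I would expand $e^{\frac{i}{\hbar}S^{+}_{\hbar}(x,q)}$ as a formal power series in $q$ (with coefficients that are formal power series in $\hbar$ with smooth coefficients in $x$), so that the integrand is a sum of monomials $q_{i_1}\cdots q_{i_m}$ against the oscillating exponential. Using the elementary identity $q_i\, e^{\frac{i}{\hbar}(\f^j_{\hbar}-y^j)q_j}= -\frac{\hbar}{i}\,\der{}{y^i}\,e^{\frac{i}{\hbar}(\f^j_{\hbar}-y^j)q_j}$, each factor of $q_i$ is converted into $-\frac{\hbar}{i}\der{}{y^i}$ acting on the exponential, so that the whole symbol becomes the formal differential operator $S^{+}_{\hbar}(x,-\frac{\hbar}{i}\der{}{y})$ pulled in front of the integral. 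Integration over the momenta then yields the delta function $\frac{1}{(2\pi\hbar)^{n_2}}\int dq\,e^{\frac{i}{\hbar}(\f^j_{\hbar}-y^j)q_j}=\delta(y-\f_{\hbar}(x))$, and a single integration by parts in $y$ transfers the derivatives from the delta function onto $w$, flipping each $-\frac{\hbar}{i}\der{}{y^i}$ into $+\frac{\hbar}{i}\der{}{y^i}$ and thereby producing exactly $S^{+}_{\hbar}(x,\frac{\hbar}{i}\der{}{y})$. The final integration against $\delta(y-\f_{\hbar}(x))$ performs the substitution $y^i=\f^i_{\hbar}(x)$, giving~\eqref{eq.phiw}. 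Conceptually this is just the passage from the symbol $e^{\frac{i}{\hbar}S^{+}_{\hbar}}$ to its associated (infinite-order) differential operator, with the linear term $\f^i_{\hbar}q_i$ supplying the base substitution and $S^0_{\hbar}$ the phase factor; this is precisely why the three groups of terms in~\eqref{eq.qgenfun} were separated.

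Two points need care. The first, and main, obstacle is the rigorous meaning of these manipulations: the integral is a formal oscillatory integral, $e^{\frac{i}{\hbar}S^{+}_{\hbar}}$ is an infinite series in $q$, and the operator in~\eqref{eq.phiw} is of infinite order. I would justify the term-by-term interchange of the $q$-expansion with integration, reading each monomial integral as a derivative of the delta function, and then check that the output lies in $\ofun(M_1)$. Here the order condition is decisive: since $S^{+}_{\hbar}$ starts at order $2$ in $q$, the substitution $q\mapsto\frac{\hbar}{i}\der{}{y}$ attaches to the order-$m$ term a factor $\hbar^{m}$ carried by $m$ derivatives, so that acting on $w(y)=A(\hbar,y)e^{\frac{i}{\hbar}b_0(y)}$ — where $\frac{\hbar}{i}\der{}{y}$ effectively behaves as $\der{b_0}{y}+\frac{\hbar}{i}\der{}{y}$ on the amplitude — every power of $\hbar$ receives only finitely many contributions and the exponent reorganises into a genuine oscillatory wave function, the $O(\hbar^{-1})$ part joining the new phase. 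The second point is the super case: the Berezin integration, the Koszul signs in the identity relating $q_i$ to $\der{}{y^i}$, and the signs produced by integration by parts (already visible in the factor $(-1)^{\itt}$ of the stationary-phase computation) must all be tracked so that the derivatives in $S^{+}_{\hbar}(x,\frac{\hbar}{i}\der{}{y})$ are correctly ordered. I expect this to go through monomial by monomial, the sign conventions being exactly those that render the even-case computation above sign-consistent.
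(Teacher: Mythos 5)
Your proposal is correct and follows essentially the same route as the paper: both rest on splitting the phase via~\eqref{eq.qgenfun}, pulling out $e^{\frac{i}{\hbar}S^0_{\hbar}(x)}$, and invoking the Fourier duality between multiplication by $q_i$ and differentiation in $y^i$ (the paper phrases this as a Fourier transform--multiplication--inverse Fourier transform composition, you as a term-by-term conversion producing derivatives of $\delta(y-\f_{\hbar}(x))$, which is the same computation). Your extra remarks on the formal convergence of the $q$-expansion and on the super signs go beyond what the paper writes out, but do not change the argument.
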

\begin{proof} We have 
\begin{multline*} 
    (\hat\Phi^* w)(x)=\frac{1}{(2\pi\hbar)^{n_2}} \int D(y,q) \,\, e^{\frac{i}{\hbar}(S_{\hbar}^0(x)+\f^i_{\hbar}(x)q_i+S^{+}_{\hbar}(x,q)-y^iq_i)}\,w(y)=\\
    e^{\frac{i}{\hbar}S_{\hbar}^0(x)}\frac{1}{(2\pi\hbar)^{n_2}} \int  Dq \,\, 
    e^{\frac{i}{\hbar}\f^i_{\hbar}(x)q_i}
    e^{\frac{i}{\hbar}S^{+}_{\hbar}(x,q)}
    \int Dy \,\, 
    e^{-\frac{i}{\hbar}y^iq_i}
    \,w(y)\,.
\end{multline*}
The integral  is the composition of the Fourier transform of $w(y)$ to $q_i$, the multiplication by a function of $q_i$, and the inverse Fourier transform from $q_i$ to $y^i$, where $\f^i_{\hbar}(x)$ is substituted for $y^i$. Recalling the standard relation between multiplication and differentiation under Fourier transform, we arrive  at the claimed result.
\end{proof}

Taking into account that $S^{+}_{\hbar}(x,q)$ contains  terms of order $2$ and higher in $q_i$, we see that in the exponential in~\eqref{eq.phiw} there is a formal differential operator (of infinite order) of the form
\begin{equation*}
    \frac{\hbar}{i} \frac{1}{2}\,S^{ij}_{\hbar}(x)\der{}{y^i}\der{}{y^i} + \ldots
\end{equation*}
where the dots stand for terms with the higher derivatives and at the same time of higher order in $\hbar$. Therefore the exponential of it will be a formal differential operator in $y^i$ of the form $1+O(\hbar)$, where the differential part starts from second derivatives. Presumably, any such operator can be conversely expressed as an exponential as above. This more or less describes the `differential' part of a quantum pullback. We may conclude that a quantum microformal morphism consists of the three ingredients: a formal $\hbar$-dependent  differential operator on $M_2$ as described, an actual smooth map, or its quantum perturbation, $\f_{\hbar}\co M_1\to M_2$, and an $\hbar$-dependent phase factor on $M_1$. What looks a bit unsatisfactory in such a picture, is that the differential operators in $y^i$ appearing there are `with constant coefficients' (independent of $y^i$), which obviously seems coordinate-dependent. However, this difficulty is only imaginary; it is a problem of a description, not a problem with the definition of our objects. We may see this as follows.

If we first perform the integration over $q_i$ in~\eqref{eq.phihat}, we will express $\hat\Phi^*$ as an integral operator with a (distributional) integral kernel $K(x,y)$. This kernel is nothing but the Fourier transform  of the phase function
\begin{equation*}
    e^{\frac{i}{\hbar}(S_{\hbar}^0(x)+\f^i_{\hbar}(x)q_i+S^{+}_{\hbar}(x,q))}
\end{equation*}
(from $q_i$ to $y^i$). One can easily see (this is very close to the argument  above), that such a Fourier transform can be represented, apart from the phase factor $e^{\frac{i}{\hbar}S_{\hbar}^0(x)}$, which is just an oscillatory wave function on $M_1$, as the action of the differential operator
\begin{equation*}
    e^{\frac{i}{\hbar}S^{+}_{\hbar}\left(x,\frac{\hbar}{i}\der{}{y}\right)}
\end{equation*}
on  the delta-function $\delta(y-\f_{\hbar}(x))$. This is of course almost the same as above, but now it gives a clear geometric picture: the Schwarz kernel $K(x,y)$ of the operator $\hat\Phi^*$  is a distribution supported on the graph of the `map' $\f_{\hbar}\co M_1\to M_2$. This is well-defined and independent of the choice of coordinates. 

In the description of $\hat\Phi^*$, it would therefore  be better not to separate the action of the differential operator and the substitution $y^i=f^i_{\hbar}(x)$, but treat them together as a kind of `\,differential operator over a map $\f_{\hbar}\co M_1\to M_2$\,'. 

With the same argument we can   address the questions concerning the transformation laws for our generating functions. A quantum generating function $S_{\hbar}(x,q)$ can be regarded as a genuine function w.r.t. $x$; as for its dependence on $y$, the transformation law is as follows. For the exponential 
\begin{equation*}
    e^{\frac{i}{\hbar}S_{\hbar}(x,q)}\,,
\end{equation*}
the transformation under a change of coordinates $y=y(y')$   is the composition of the Fourier transform from the variables $q_i$ to the variables $y^i$, the usual change of variables in the resulting function of $y^i$, and the inverse Fourier transform from the variables $y^{i'}$ to the variables $q_{i'}$. Note that the first Fourier transform gives a distribution supported on the graph of $\f_{\hbar}\co M_1\to M_2$ (such as the delta-function $\delta(y-\f_{\hbar}(x))$ and its derivatives); there is no difficulty in performing invertible changes of variables  w.r.t. $y^i$ in such distributions. The Fourier transforms in question are well-defined and give a one-to-one correspondence between generating functions and distributions in this class. The described transformation law defines quantum generating functions $S_{\hbar}(x,q)$ as geometric objects on $M_1\times M_2$. (Recalling  that   Legendre transform may be regarded as the classical limit of Fourier transform, from here we may  obtain again the transformation law for classical generating functions $S_{\text{class}}(x,q)$ as established in~\cite{tv:nonlinearpullback}.)

%\bibliographystyle{plain}
%\bibliography{/home/ted_voronov/Desktop/Work/Local_TeX_Files/bibtex/bib/misc/geometry} %% Office UNIX
%\bibliography{geometry} %% Laptop Windows
%\end{document}
\def\cprime{$'$}

\end{document}